\newlength{\myheight}
\newlength{\myheighta}
\DeclareMathOperator{\res}{res}
\newcommand{\Set}[1]{\left\{#1\right\}}
\newcommand{\setDef}[2]{{#1}\left|\,\vphantom{#1}{#2}\right.}
\newcommand{\SetDef}[2]{\Set{\setDef{#1}{#2}}}
\journal{ \ }
\begin{document}

\begin{frontmatter}

\title{An effective method for computing Grothendieck point residue mappings }
\thanks{This work has been partly supported by JSPS  Grant-in-Aid for Scientific Research (C) (Nos 18K03214,18K03320).}

\author{Shinichi Tajima}
\address{Graduate School of Science and Technology, Niigata University, \\8050, Ikarashi 2-no-cho, Nishi-ku Niigata, Japan}
\ead{tajima@emeritus.niigata-u.ac.jp}

\author{Katsusuke Nabeshima}
\address{Graduate School of Technology, Industrial and Social Sciences, Tokushima University, Minamijosanjima-cho 2-1, Tokushima, Japan}
\ead{nabeshima@tokushima-u.ac.jp}

\begin{abstract}
Grothendieck point residue is considered in the context of computational complex analysis. A new effective method is proposed for computing Grothendieck point residues mappings and residues. 
Basic ideas of our approach are the use of Grothendieck local duality and a 
transformation law for local cohomology classes. A new tool is devised for efficiency to solve the extended ideal membership problems  in local rings. The resulting algorithms are described with an example to illustrate them. An extension of the proposed method to parametric cases is also discussed as an application.

\end{abstract}

\begin{keyword}
Grothendieck local residues mapping \sep  algebraic local cohomology \sep  transformation law
\MSC 32A27 \sep 32C36 \sep 13P10 \sep 14B15


\end{keyword}

\end{frontmatter}


\section{Introduction}
The theory of Grothendieck residue and duality is a cornerstone of algebraic geometry and complex analysis \citep{gh,gr,hartshorne,ku}. 
It has been used and applied 
in diverse problems of several different fields of mathematics \citep{bb2,bklp,cm,ds,g,le,ob,p,s2}. In the global situation, methods for computing the total sum of Grothendieck residues have been extensively studied and applied by several authors \citep{bklp,cds,ky,y}.

The concept of Grothendieck local residue together with the local duality theory also play quite important roles in complex analysis, especially in singularity theory \citep{bss,c,co,kl,ob,s}.  
Computing  \ Grothendieck local residues is therefore of fundamental importance. However, since the problem is local in nature, it is difficult in general to compute Grothendieck local residues \citep{ob2}. In fact, a direct use of the classical transformation law 
described in \citep{hartshorne} only gives algorithms which lack efficiency. Compared to the global situation, despite the importance, much less work has been done on algorithmic aspects of computing Grothendieck local residues \citep{em2,m,ot1,ot2,tn2}.
Grothendieck local residues with parameters are useful in the study of singularity theory, for example, deformations of singularity and 
unfoldings of holomorphic foliations \citep{kul,sai,v}. However, to the best of our knowledge,  existing algorithm of computing Grothendieck local residues are not designed to be able to treat parametric cases. 

In this paper, we consider methods for computing Grothendieck point residues
from the point of view of complex analysis and singularity theory. We propose a new effective
method for computing Grothendieck point residues mappings and residues, which can be extended to treat parametric cases.

Let $  X  \subset {\mathbb C}^n $ be an  open neighborhood of the origin $ O \in {\mathbb C}^n $ and let
 $ f_1(z),f_2(z),$ $\ldots,f_n(z)  $ be  $ n $ holomorphic functions defined on $ X $, 
where $ z=(z_1,z_2,\ldots,z_n) \in X$. 
  Assume that their common locus in $ X $ is the origin $ O $: 
$   \{ z \in X \mid f_1(z) = f_2(z)= \cdots = f_n(z) =0 \}  = \{ O \}$.

Then, for a given germ $ h(z) $ of holomorphic function at $ O $, the Grothendieck point residue at the origin $ O $, denoted by 
$$ {\rm res}_{\{O\}}\left(\frac{h(z)dz}{f_1(z)f_2(z) \cdots f_n(z)}\right), $$
of the differential form
$ \displaystyle{\frac{h(z)dz}{f_1(z)f_2(z) \cdots f_n(z)}} $ can be expressed, or defined, 
as the integral 
$$ 
\left(\frac{1}{2\pi \sqrt{-1}}\right)^n \int \cdots \int_{\gamma_{\epsilon}} \frac{h(z)dz}{f_1(z)f_2(z) \cdots f_n(z)}, 
$$
where $ dz= dz_1 \wedge dz_2 \wedge \cdots \wedge dz_n, $ and where  $ \gamma_{\epsilon} $ is a real $n$-dimensional cycle:
 $$ \gamma_{\epsilon} = \{ z \in X \mid |f_1(z)|=|f_2(z)|= \cdots =|f_n(z)| =\epsilon \},$$ 
with $ 0 < \epsilon \ll 1.  $
(See for instance, \citep{bb2,gh,ton}).

Let 
$$ h(z) \longrightarrow  {\rm res}_{\{O\}}\left(\frac{h(z)dz}{f_1(z)f_2(z) \cdots f_n(z)}\right) $$
be the Grothendieck point residue mapping that assigns to a holomorphic function $ h(z)$ the value of the Grothendieck point residue. We show that, based on the concept of local cohomology, the use of  Grothendieck local duality and a transformation law for local cohomology classes given by J. Lipman \citep{l} allows us to design an effective method for computing Grothendieck local residue mappings and another one for computing Grothendieck local residues. Note that the classical transformation law on Grothendieck residue is of no avail for computing Grothendieck local residue mappings. Since we compute Grothendieck local residue mappings, our method is applicable when the holomorphic function $h(z)$ in the numerator is computable, that is the case when the coefficients of the Taylor expansion of $h(z)$ is computable. This is an advantage of our approach. We also show that the proposed method can be extended to treat parametric cases.  This is another advantage of our approach.

In Section~2, we recall the transformation law for local cohomology classes and Grothendieck local duality.  In Section~3, we fix our notation and we briefly recall our basic tool, an algorithm for 
computing Grothendieck local duality. We devise, in the context of exact computation, a new tool which plays 
a key role in the resulting algorithm. In Section~4, we describe the resulting algorithm for computing Grothendieck point residue mappings and the algorithm for computing Grothendieck point residues. In Section~5, as an application, we generalize the proposed method to treat parametric cases and we show, by using an example, an algorithm for computing Grothendieck point residues associated to a $ \mu$-constant deformation of 
 quasi homogeneous isolated hypersurface singularities.

\section{Local analytic residues}

The concept of Grothendieck point residue was introduced by A. Grothendieck in terms of
derived categories and local cohomology. In this section, we briefly recall some basics on
transformation law for local cohomology classes and Grothendieck local duality.

Let $  X  \subset {\mathbb C}^n $ be an  open neighborhood of the origin $ O \in {\mathbb C}^n $.
Let $ {\mathcal O}_X $ be the sheaf on $ X $ of holomorphic functions, and $ \Omega_X^n $ 
the sheaf of holomorphic $ n $-forms. 
Let $ {\mathcal H}_{\{O\}}^n({\mathcal O}_X) $ (resp.  $ {\mathcal H}_{\{O\}}^n(\Omega_X^n) $ ) 
denote the local cohomology supported at $ O $ of $ {\mathcal O}_X $ (resp. $ \Omega_X^n $).

Then, $ {\mathcal O}_{X, O} $, the stalk at $ O $ of the sheaf $ {\mathcal O}_X $, and the local cohomology
$ {\mathcal H}_{\{O\}}^n(\Omega_X^n) $ are mutually dual as locally convex topological vector spaces \citep{bs}.
The duality is given by the point residue pairing:
$$ {\rm res}_{\{ O \}} ( \ast, \ \ast ): 
{\mathcal O}_{X, O} \times {\mathcal H}_{\{O\}}^n(\Omega_X^n)  \longrightarrow {\mathbb C} $$

Let $ F=[ f_1(z), f_2(z), \ldots, f_n(z)] $ be an $n$-tuple of $ n $ holomorphic functions defined on $X$. 
Assume that their common locus $ \{ z \in X \mid f_1(z) = f_2(z)= \cdots = f_n(z) =0 \} $ in $ X $ 
is the origin $ O. $ Let $ I_F $ denote the ideal in 
$ {\mathcal O}_{X,O} $ generated by $ f_1(z), f_2(z),\ldots, f_n(z). $ 
Let $ \omega_F $ denote a local cohomology class
$$ \omega_F = \left[ \begin{array}{c} dz \\ f_1(z) f_2(z) \cdots f_n(z) \end{array} \right]  $$
in  $ {\mathcal H}_{\{O\}}^n(\Omega_X^n) $, where
 $ dz=dz_1 \wedge dz_2 \wedge \cdots \wedge dz_n, $ and $ \left[ \quad \right] $ stands for 
Grothendieck symbol \citep{hartshorne,hartshorne1}. Residue theory says that, for $ h(z) $ in $ {\mathcal O}_{X,O}, $ one has 

$$ {\rm res}_{\{O\}}\left(\frac{h(z)dz}{f_1(z)f_2(z) \cdots f_n(z)}\right) = {\rm res}_{\{O\}}( h(z), \omega_F). $$

\subsection{Transformation law}

Since $ V(I_F)\cap X = \{ O \}, $ there exists, for each $ i=1,2,\ldots,n,$ a positive integer $ m_i $ such that $ z_i^{m_i} \in I_F. $ 
There exists an $n$-tuple of holomorphic functions $ a_{i, 1}(z), a_{i,2}(z),\ldots,a_{i,n}(z) $ such that  
$$ z_i^{m_i} = a_{i,1}(z)f_1(z) + a_{i,2}(z)f_2(z) + \cdots + a_{i,n}(z)f_n(z), \qquad i=1,2,\ldots,n. $$
Set $ A(z) = \det (a_{i,j}(z))_{1\leq i,j \leq n}. $

We have the following key lemma \citep{l}.

\begin{lem}[Transformation law for local cohomology classes] \ \\
In  $ {\mathcal H}_{\{O\}}^n(\Omega_X^n), $  the following formula holds.
$$ \omega_F =  
\left[ \begin{array}{c} A(z) dz \\ z_1^{m_1} z_2^{m_2} \cdots z_n^{m_n} \end{array} \right].  $$

\end{lem}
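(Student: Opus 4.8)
The plan is to verify the identity by reducing both sides to the same Čech/Dolbeault representative, using the multiplicativity of the Grothendieck symbol under a change of the defining regular sequence. Since $V(I_F)\cap X=\{O\}$, the sequence $f_1,\dots,f_n$ is a regular sequence in $\mathcal{O}_{X,O}$, as is $z_1^{m_1},\dots,z_n^{m_n}$, and the relations $z_i^{m_i}=\sum_j a_{i,j}(z)f_j(z)$ express the second sequence in terms of the first via the matrix $(a_{i,j}(z))$. The classical transformation law for Grothendieck symbols (Lipman \citep{l}; see also \citep{hartshorne}) asserts precisely that if $g_i=\sum_j a_{i,j}f_j$ with all the $g_i$ and $f_j$ forming regular sequences cutting out the same point, then
$$
\left[\begin{array}{c} \eta \\ g_1\cdots g_n \end{array}\right]
= \left[\begin{array}{c} \det(a_{i,j})\,\eta \\ f_1\cdots f_n \end{array}\right]
$$
in $\mathcal{H}^n_{\{O\}}(\Omega^n_X)$, for any holomorphic $n$-form $\eta$. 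Applying this with $g_i=z_i^{m_i}$, $\eta = dz$, and reading the equality from right to left gives exactly $\omega_F = \left[\begin{array}{c} A(z)\,dz \\ z_1^{m_1}\cdots z_n^{m_n}\end{array}\right]$.

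The main steps I would carry out are: first, record that $f_1,\dots,f_n$ and $z_1^{m_1},\dots,z_n^{m_n}$ are both regular sequences in $\mathcal{O}_{X,O}$ with zero locus $\{O\}$, so that the Grothendieck symbols in question are well defined elements of $\mathcal{H}^n_{\{O\}}(\Omega^n_X)$; second, invoke the transformation law for the Grothendieck symbol in the form above, which is a local-cohomological incarnation of the fact that the Koszul/Čech cohomology class is unchanged when one multiplies the generators by a matrix and simultaneously multiplies the numerator by its determinant; third, substitute the specific data $a_{i,j}$, $m_i$, $A=\det(a_{i,j})$ and $\eta = dz$ to obtain the stated formula. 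One should also note that the class on the right-hand side does not depend on the particular choice of the $a_{i,j}$ (different choices differ by a Koszul syzygy, which contributes a coboundary), so the formula is well posed even though $A(z)$ itself is not canonical.

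The step I expect to be the real content — and the one the authors presumably supply in detail — is the justification of the transformation law for the Grothendieck symbol itself, i.e.\ the equality displayed above. This is where one must actually work with an explicit representative: expand $\dfrac{\det(a_{i,j})\,dz}{f_1\cdots f_n}$ and compare it, term by term modulo coboundaries in the Čech complex on the punctured polydisc, with $\dfrac{dz}{g_1\cdots g_n}$, using the relations $g_i=\sum_j a_{i,j}f_j$ to rewrite $1/(g_1\cdots g_n)$ via a Cramer-type expansion. The bookkeeping is where all the difficulty lies; everything else is formal. Since the excerpt labels this the "key lemma" and attributes it to \citep{l}, I would present the reduction to the classical transformation law and cite \citep{l} for the representative-level computation rather than reproduce it.
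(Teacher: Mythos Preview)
Your proposal is correct and aligns with the paper's own treatment: the paper does not give a proof at all but simply writes ``For the proof of the result above, we refer the reader to \citep{ku,l},'' which is exactly the reduction-to-Lipman strategy you outline. If anything, your write-up is more informative than the paper's, since you spell out why both sequences are regular, how the matrix relation feeds into the Grothendieck-symbol transformation law, and why the class is independent of the particular choice of the $a_{i,j}$.
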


For the proof of the result above, we refer the reader to \citep{ku,l}. 
Note that the formula above implies the classical transformation law 
$$  {\rm res}_{\{O\}}\left(\frac{h(z)dz}{f_1(z)f_2(z) \cdots f_n(z)}\right)= 
{\rm res}_{\{O\}}\left(\frac{h(z)A(z)dz}{z_1^{m_1} z_2^{m_2} \cdots z_n^{m_n}}\right) $$
for point residues described in \citep{hartshorne}. See also 
\citep{bb2,bh,gh,ky}.

\subsection{Grothendieck local duality}

We define $ W_F $ to be the set of local cohomology classes in $  {\mathcal H}_{\{O\}}^n(\Omega_X^n) $ 
that are killed by $ I_F: $
$$ W_F = \{ \omega \in  {\mathcal H}_{\{O\}}^n(\Omega_X^n)  \mid 
f_1(z)\omega = f_2(z)\omega = \cdots = f_n(z)\omega = 0 \}. $$
Then, according to Grothendieck local duality, the pairing 
$$ {\rm res}_{\{O\}}( *, \ *): {\mathcal O}_{X,O}/I_F \times W_F \longrightarrow {\mathbb C} $$
induced by the residue mapping is non-degenerate \citep{ak,gr,hartshorne,l2}.

Let $ \succ^{-1} $ be a local term ordering on the local ring $ {\mathcal O}_{X,O} $ and let 
$ \{ z^{\alpha} \mid \alpha \in \Lambda_F \} $ denote the monomial basis of the quotient space 
$ {\mathcal O}_{X,O}/I_F $ with respect to the local term ordering $ \succ^{-1}, $ where
$ \Lambda_F  \subset {\mathbb N}^n $ is the set of exponents $ \alpha $ of basis monomials $ z^{\alpha} $.

Let $ \{ \omega_{\alpha}  \in W_F    \mid \alpha \in \Lambda_F \} $ denote the 
dual basis of 
$ \{ z^{\alpha} \mid \alpha \in \Lambda_F \} $ with respect to the Grothendieck point residue.  Then, we have 
\begin{enumerate}
\item[(i)] $ {\mathcal O}_{X,O}/I_F \cong {\rm Span}_{{\mathbb C}}\{ z^{\alpha} \mid \alpha \in \Lambda_F \}, $ \\

\item[(ii)] $ W_F = {\rm Span}_{{\mathbb C}}\{ \omega_{\alpha} \mid \alpha \in 
\Lambda_F \}, $  \\

\item[(iii)] $ {\rm res}_{\{O\}}( z^{\alpha}, \omega_{\beta}) = 
\left\{ \begin{array}{rl} 1, \quad & \alpha = \beta, \\
                                        0, \quad & \alpha \ne \beta, \qquad \alpha, \beta \in \Lambda_F. \\
         \end{array} \right. $
\end{enumerate}

\subsection{Residue mapping}

Since $ \omega_F $ satisfies $ f_1(z)\omega_F = f_2(z)\omega_F = \cdots = f_n(z)\omega_F = 0, $ 
the local cohomology class $ \omega_F $ is in $ W_F $. Therefore $ \omega_F $ can be expressed as a 
linear combination of the basis $ \{ \omega_{\alpha} \mid  \alpha \in \Lambda_F \}. $

Assume that, for the moment, we have the following expression:
$$ \omega_F = \sum_{\alpha \in \Lambda_F} b_{\alpha}\omega_{\alpha}, \quad b_{\alpha} \in {\mathbb C}. $$

Now let 
$$ {\rm NF}_{\succ^{-1}}(h)(z) = \sum_{\alpha \in \Lambda_F} h_{\alpha}z^{\alpha}, 
\quad h_{\alpha} \in {\mathbb C} $$
be the normal form of the given holomorphic function $ h(z). $  Then, we have the following.

\begin{thm}

$$ {\rm res}_{\{O\}}\left(\frac{h(z)dz}{f_1(z)f_2(z) \cdots f_n(z)}\right)
= \sum_{\alpha \in \Lambda_F} h_{\alpha} b_{\alpha} $$

\end{thm}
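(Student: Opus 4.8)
The plan is to reduce the statement to the Grothendieck local duality pairing recalled in Section~2 and then simply expand in the chosen dual bases. First I would invoke the residue identity
$$ {\rm res}_{\{O\}}\!\left(\frac{h(z)dz}{f_1(z)f_2(z) \cdots f_n(z)}\right) = {\rm res}_{\{O\}}( h(z), \omega_F), $$
which is exactly the residue-theoretic statement recalled just above Section~2.1. This rewrites the left-hand side in terms of the pairing ${\rm res}_{\{O\}}(\ast,\ast)$ on $ {\mathcal O}_{X,O} \times {\mathcal H}_{\{O\}}^n(\Omega_X^n). $

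Next I would use that $ \omega_F \in W_F $ in order to replace $ h(z) $ by its normal form. Since the residue pairing is $ {\mathbb C} $-bilinear and respects the $ {\mathcal O}_{X,O} $-module structure of local cohomology, i.e. $ {\rm res}_{\{O\}}(a(z)b(z),\omega) = {\rm res}_{\{O\}}(a(z), b(z)\omega) $, any element of $ I_F $ pairs to zero against any element of $ W_F $: writing $ h(z) - {\rm NF}_{\succ^{-1}}(h)(z) = \sum_{i=1}^{n} c_i(z)f_i(z) \in I_F $ and using $ f_i(z)\omega_F = 0 $, one gets $ {\rm res}_{\{O\}}(h(z) - {\rm NF}_{\succ^{-1}}(h)(z), \omega_F) = 0. $ This is precisely the statement that the pairing factors through $ {\mathcal O}_{X,O}/I_F \times W_F $, as recorded in Section~2.2, and hence
$$ {\rm res}_{\{O\}}( h(z), \omega_F) = {\rm res}_{\{O\}}\bigl( {\rm NF}_{\succ^{-1}}(h)(z), \omega_F \bigr). $$

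Finally I would substitute the two expansions $ {\rm NF}_{\succ^{-1}}(h)(z) = \sum_{\alpha \in \Lambda_F} h_{\alpha}z^{\alpha} $ and $ \omega_F = \sum_{\beta \in \Lambda_F} b_{\beta}\omega_{\beta} $, use bilinearity of the residue pairing, and apply the duality relation (iii), namely $ {\rm res}_{\{O\}}(z^{\alpha}, \omega_{\beta}) = 1 $ if $ \alpha = \beta $ and $ 0 $ otherwise, to obtain
$$ {\rm res}_{\{O\}}\bigl( {\rm NF}_{\succ^{-1}}(h)(z), \omega_F \bigr) = \sum_{\alpha,\beta \in \Lambda_F} h_{\alpha} b_{\beta}\, {\rm res}_{\{O\}}(z^{\alpha}, \omega_{\beta}) = \sum_{\alpha \in \Lambda_F} h_{\alpha} b_{\alpha}, $$
which is the asserted formula.

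The argument is essentially bookkeeping once Grothendieck local duality is in place; the only point that needs a word of care is the vanishing $ {\rm res}_{\{O\}}(g,\omega) = 0 $ for $ g \in I_F $ and $ \omega \in W_F $, which is where the annihilation condition defining $ W_F $ together with the module compatibility of the residue symbol is used — everything else is linearity and the defining property of the dual basis. In particular the transformation law of Lemma~1 is not needed for this theorem; it enters only later, when one actually computes the coefficients $ b_{\alpha}. $
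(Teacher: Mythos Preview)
Your proof is correct and follows essentially the same route as the paper's own argument: reduce $h$ to its normal form using $h-{\rm NF}_{\succ^{-1}}(h)\in I_F$ and the vanishing of the pairing on $I_F\times W_F$, then expand bilinearly in the dual bases and apply ${\rm res}_{\{O\}}(z^{\alpha},\omega_{\beta})=\delta_{\alpha\beta}$. Your additional remarks (the explicit justification via module compatibility, and the observation that Lemma~1 is irrelevant here) are accurate and only make the argument slightly more explicit than the paper's version.
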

\begin{proof}
Since $ h - {\rm NF}_{\succ^{-1}}(h) \in I_F, $ we have
\begin{center}
$ {\rm res}_{\{O\}}( h(z), \omega_F) =  {\rm res}_{\{O\}}( {\rm NF}_{\succ^{-1}}(h)(z) , \omega_F). $
\end{center}
Therefore, 
\begin{center}
$\displaystyle 
 {\rm res}_{\{O\}}\left(\frac{h(z)dz}{f_1(z)f_2(z) \cdots f_n(z)}\right) =
{\rm res}_{\{O\}}\left( \sum_{\alpha \in \Lambda_F}h_{\alpha}z^{\alpha}, \sum_{\beta \in \Lambda_F}b_{\beta}\omega_{\beta}\right), $
\end{center}
which is equal to 
\begin{center}
$
\displaystyle \sum_{\alpha, \beta \in \Lambda_F} h_{\alpha}b_{\beta}{\rm res}_{\{O\}}(z^{\alpha}, \omega_{\beta})
=\sum_{\alpha \in \Lambda_F} h_{\alpha} b_{\alpha}.
$
\end{center}
This completes the proof.
\end{proof}

\section{Tools}
Let us  consider a method for computing Grothendieck point residues in the context of symbolic 
computation. We start by recalling some basics on an algorithm for computing Grothendieck local duality given 
in \citep{tn3,tnn}.

Let $ K={\mathbb Q} $ be the field of rational numbers and let $ z=(z_1,z_2,\ldots,z_n) \in {\mathbb C}^n. $
Let $H^n_{[O]}(K[z])$ denote the algebraic local cohomology defined to be 
$$ H^n_{[O]}(K[z])= \lim_{k \rightarrow \infty }Ext^n_{K[z]}(K[z]/{\mathfrak m}^k, \Omega_X^n), $$ 
where $ {\mathfrak m} $ is the maximal ideal $ {\mathfrak m} = \langle z_1, z_2,\ldots,z_n \rangle $ in $ K[z] =K[z_1,z_2,\ldots,z_n].$

We adopt the notation used in \citep{nt2,nt3,nt5,nt4} to handle local cohomology classes. 
For instance, a polynomial $ \sum_{\lambda}c_{\lambda} \xi^{\lambda} $ in
 $ K[\xi] = K[\xi_1,\xi_2,\ldots,\xi_n] $ represents 
the local cohomology class of the form
$\displaystyle  \sum_{\lambda=(\ell_1, \ell_2,\ldots,\ell_n)} c_{\lambda} 
\left[ \begin{array}{c} 1 \\ z_1^{\ell_1+1} z_2^{\ell_2+1} \cdots z_n^{\ell_n+1} \end{array} \right]. $
Note that a multiplication on $ \xi^{\beta} $ by $ z^{\alpha} $ is 
$$ 
z^{\alpha} \ast \xi^{\beta} = 
\left\{
\begin{array}{cl} \xi^{\beta-\alpha},  & \beta \geq \alpha. \\
                                                                                    0, & \text{otherwise.} \\
\end{array} \right. 
$$
Let $ \succ $ be a term ordering on $ K[\xi] $. 
For a local cohomology class
$\displaystyle  \psi=c_{\alpha} \xi^{\alpha} + \sum_{\xi^\alpha \succ \xi^\gamma} c_{\gamma}\xi^{\gamma} $, 
we call 
$ \xi^{\alpha} $ the head monomial of $ \psi $, and $ \alpha \in {\mathbb N}^n$ the head exponent of $ \psi $.

Let $ F = [f_1(z), f_2(z), \ldots,f_n(z)] $ be a list of $n$ polynomials  $f_1,\dots,f_n$ in $K[z] $.
We also assume as in the previous section that there exists an open neighborhood $ X $ of the origin $ O $ such that  
their common locus is the origin: $ \{ z \in X \mid f_1(z)=f_2(z)= \cdots =f_n(z)=0 \} = \{O\}. $

We set $$ H_F = \{ \psi \in  H^n_{[O]}(K[z]) \mid
 f_1(z) \ast \psi = f_2(z) \ast \psi = \cdots = f_n(z) \ast \psi = 0 \}. $$

\subsection{Algorithm for computing Grothendieck local duality}

In \citep{nt6,tnn}, an algorithm for computing bases of $ H_F $ is introduced.  Let $ \Psi_F $ denote an  output of the
algorithm. Then, 
$$ W_F = {\rm Span}_{{\mathbb C}}\{ \psi dz \mid \psi \in \Psi_F \} $$
holds. Furthermore, the algorithm computes Grothendieck local duality 
with respect to the Grothendieck local residue pairing. Here we recall some basic properties of the algorithm. 

An output of the algorithm, say $ \Psi_F, $ a basis of the vector space $ H_F$, has the following form:
$$ \Psi_F = \SetDef{ \psi_{\alpha}}{ \psi_{\alpha} 
= \xi^{\alpha} + \sum_{\xi^\alpha \succ \xi^\gamma} c_{\gamma}\xi^{\gamma}, \quad \alpha \in \Lambda_F},
$$
where $ \Lambda_F \subset {\mathbb N}^n $ is the set of the head exponents of local cohomology classes 
in $ \Psi_F$.

Let $ L_F $ denote the set of lower exponents of local cohomology classes in $ \Psi_F: $
$$ L_F = \SetDef{ \gamma \in {\mathbb N}^n }{ \exists \
 \psi_{\alpha} = \xi^{\alpha} + \sum_{\xi^\alpha \succ \xi^\gamma} c_{\gamma}\xi^{\gamma} \in \Psi_F  \text{ such that } \ c_{\gamma} \ne 0 }. $$

Set $ E_F = \Lambda_F \cup L_F $ and $ T_F = \{ \xi^{\lambda} \mid \lambda \in E_F \}. $
Now let $ \ell_{F, i} = {\rm max} \{ \ell \mid \xi_i^{\ell} \in T_F \}. $ 
Then Grothendieck local duality implies the following.

\begin{lem}
Set $ m_i = \ell_{F, i} + 1. $ Then $ z_i^{m_i} \in I_F $ holds, where 
$ I_F $ is 
the ideal in the local ring $ K\{z\} $ generated by $ f_1(z), f_2(z),\ldots , f_n(z). $
\end{lem}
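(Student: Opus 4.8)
The plan is to use Grothendieck local duality to reduce the statement $z_i^{m_i} \in I_F$ to a statement purely about local cohomology classes, namely that $z_i^{m_i} \ast \psi = 0$ for every $\psi \in H_F$. Since the residue pairing on $\mathcal{O}_{X,O}/I_F \times W_F$ is non-degenerate, a germ $g(z) \in \mathcal{O}_{X,O}$ lies in $I_F$ if and only if $\operatorname{res}_{\{O\}}(g(z), \omega) = 0$ for all $\omega \in W_F$; equivalently, working on the algebraic-local-cohomology side where $W_F = \operatorname{Span}_{\mathbb C}\{\psi\, dz \mid \psi \in \Psi_F\}$, one has $g \in I_F$ iff $g(z) \ast \psi = 0$ for all $\psi \in \Psi_F$ (the pairing of $g$ against $\psi\,dz$ is the constant term of $g(z) \ast \psi$, and closure under the $\ast$-action is what lets us pass from "constant term vanishes" to "the whole class vanishes"). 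So it suffices to show $z_i^{m_i} \ast \psi_\alpha = 0$ for each basis element $\psi_\alpha \in \Psi_F$.

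**The key computation** is then the following. Each $\psi_\alpha = \xi^\alpha + \sum_{\xi^\alpha \succ \xi^\gamma} c_\gamma \xi^\gamma$ has all its monomials $\xi^\lambda$ with $\lambda \in E_F = \Lambda_F \cup L_F$; in particular the $i$-th coordinate $\lambda_i$ of every such $\lambda$ satisfies $\lambda_i \le \ell_{F,i}$ by the very definition of $\ell_{F,i} = \max\{\ell \mid \xi_i^\ell \in T_F\}$ — more precisely, $\xi_i^{\lambda_i}$ divides the monomial $\xi^\lambda$ only in a formal sense, so one should argue that the power of $\xi_i$ appearing in any monomial of any $\psi_\alpha$ is at most $\ell_{F,i}$. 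Wait — that last claim needs care: $\ell_{F,i}$ is defined via the \emph{pure powers} $\xi_i^\ell$ lying in $T_F$, not via all monomials. The honest statement one needs, and which follows from the structure of the algorithm (the set $E_F$ is closed under the $\ast$-action, i.e. it is an order ideal, a "staircase"), is that if $\xi^\lambda \in T_F$ then $\xi_i^{\lambda_i} \in T_F$ as well, hence $\lambda_i \le \ell_{F,i} = m_i - 1 < m_i$. Granting this, $z_i^{m_i} \ast \xi^\lambda = 0$ for every monomial $\xi^\lambda$ occurring in $\psi_\alpha$, so $z_i^{m_i} \ast \psi_\alpha = 0$, and we are done.

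**The main obstacle**, therefore, is justifying that $E_F$ (equivalently $T_F$) is a staircase set — i.e. closed under componentwise decrease — so that the bound on pure powers $\xi_i^\ell$ propagates to a bound on the $i$-th exponent of all monomials. This is a structural property of the output $\Psi_F$ of the duality algorithm of \citep{nt6,tnn}: the algorithm builds $H_F$ by processing monomials in increasing $\succ$-order and the span of $\Psi_F$ is a $K[\xi]$-submodule under $\ast$ (it is $\{\psi \mid z^\alpha \ast \psi \in \operatorname{Span}\Psi_F\}$-stable, being exactly $H_F$), which forces the monomial support to be downward closed. I would state this closure property explicitly (citing the construction in \citep{nt6,tnn}) as the one nontrivial input, then the residue/duality argument above closes the proof. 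As a sanity check on the indexing: $m_i = \ell_{F,i}+1$ is the smallest exponent with $\xi_i^{m_i} \notin T_F$, which is exactly the threshold past which multiplication by $z_i^{m_i}$ annihilates everything in $H_F$, consistent with $z_i^{m_i} \in I_F$ and $z_i^{\ell_{F,i}} \notin I_F$.
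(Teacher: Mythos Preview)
Your approach is essentially identical to the paper's: the paper's proof simply asserts that $z_i^{m_i} \ast \psi_{\alpha} = 0$ for every $\psi_{\alpha} \in \Psi_F$, concludes $z_i^{m_i} \ast \psi = 0$ for all $\psi \in H_F$, and then invokes Grothendieck local duality. You actually go further than the paper by isolating and justifying the staircase (downward-closure) property of $E_F$ needed to pass from the definition of $\ell_{F,i}$ via pure powers to the annihilation of every monomial in the support --- a point the paper leaves entirely implicit.
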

\begin{proof} 
Since $ z_i^{m_i} \ast \psi_{\alpha} = 0 $ and $  \psi_{\alpha} \in \Psi_F $ hold, we have
$ z_i^{m_i} \ast \psi = 0 $ for $  \psi \in H_F. $ It follows from the Grothendieck local duality that $ z_i^{m_i}$ is in $I_F. $
\end{proof}

Now let us consider the set of monomials $ M_F $ in $ K\{ z \} $ defined to be
$ M_F = \{ z^{\alpha} \mid \alpha \in \Lambda_F \} $. Let $ \succ^{-1} $ denote the 
local term ordering on $ K\{z\} $ defined as the inverse ordering of $ \succ. $
Then, $ M_F $ constitutes a monomial basis of the quotient $ K\{z\}/I_F $ with respect to the local term 
ordering $ \succ^{-1}. $ Furthermore, we have the following result \citep{tn1,tn2,tn3}.

\begin{thm}
Let $ \Psi_F, M_F $ be as above. Then, $ \Psi_F $ is the dual basis of the basis $ M_F $ with respect to Grothendieck 
local residue pairing. That is, for $ z^{\alpha} \in M_F $ and for $ \psi_{\beta} \in \Psi_F,$ 
$$ {\rm res}_{\{O\}}( z^{\alpha}, \psi_{\beta}dz) = 
\left\{ \begin{array}{rl} 1, \quad & \alpha = \beta, \\
                                        0, \quad & \alpha \ne \beta, \\
         \end{array} \right. $$
holds.
\end{thm}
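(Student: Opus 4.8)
The plan is to show that the pairing matrix of the two bases $M_F$ and $\Psi_F$ is the identity. The core observation is that the residue pairing $\operatorname{res}_{\{O\}}(z^\alpha,\psi\,dz)$ has a purely combinatorial description in the $\xi$-calculus introduced just above: when $\psi=\sum_\gamma c_\gamma\xi^\gamma$ represents the local cohomology class $\sum_\gamma c_\gamma\bigl[\begin{smallmatrix}1\\ z^{\gamma+\mathbf 1}\end{smallmatrix}\bigr]$, the classical one-variable residue formula (iterated over the $n$ coordinates) gives $\operatorname{res}_{\{O\}}(z^\delta,\xi^\gamma\,dz)=1$ if $\gamma=\delta$ and $0$ otherwise. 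Equivalently, this is exactly the multiplication action $z^\delta\ast\xi^\gamma$ read off in degree $0$: $\operatorname{res}_{\{O\}}(z^\delta,\psi\,dz)$ is the coefficient of $\xi^{\mathbf 0}$ in $z^\delta\ast\psi$, i.e.\ the coefficient of $\xi^\delta$ in $\psi$. I would state this as the first step, citing the same references as for the $\xi$-notation.

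Second, I would combine this with the shape of the elements of $\Psi_F$. Each $\psi_\beta=\xi^\beta+\sum_{\xi^\beta\succ\xi^\gamma}c_\gamma\xi^\gamma$ has head monomial $\xi^\beta$ with $\beta\in\Lambda_F$, and all other exponents $\gamma$ appearing in $\psi_\beta$ satisfy $\xi^\beta\succ\xi^\gamma$. For $z^\alpha\in M_F$ we have $\alpha\in\Lambda_F$. By Step~1, $\operatorname{res}_{\{O\}}(z^\alpha,\psi_\beta\,dz)$ equals the coefficient of $\xi^\alpha$ in $\psi_\beta$. If $\alpha=\beta$ this coefficient is $1$ (the head coefficient). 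If $\alpha\neq\beta$, I need this coefficient to be $0$: it is automatically $0$ unless $\alpha$ is a lower exponent of $\psi_\beta$, in which case $\xi^\beta\succ\xi^\alpha$, so $\beta\succ\alpha$ in the ordering $\succ$, hence $\alpha\succ^{-1}\beta$ — but that does not immediately contradict $\alpha\in\Lambda_F$, so a genuinely different argument is needed here, which is the heart of the proof.

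The resolution, and the main obstacle, is that one cannot see $\operatorname{res}(z^\alpha,\psi_\beta\,dz)=0$ for $\alpha\neq\beta$ purely from the $\xi$-shape; one must use \emph{nondegeneracy} of Grothendieck local duality (the statement recalled in Section~2.2) together with the already-established facts that $M_F$ is a monomial basis of $K\{z\}/I_F$ with respect to $\succ^{-1}$ (so $\#M_F=\dim_{\mathbb C}K\{z\}/I_F$) and that $\Psi_F\subset H_F$, whence $\{\psi\,dz:\psi\in\Psi_F\}\subset W_F$ with $\#\Psi_F=\#M_F=\dim_{\mathbb C}W_F$ and hence a basis of $W_F$. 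Thus the pairing matrix $P=\bigl(\operatorname{res}_{\{O\}}(z^\alpha,\psi_\beta\,dz)\bigr)_{\alpha,\beta\in\Lambda_F}$ is the matrix of the nondegenerate duality pairing in these two bases, so $P$ is invertible. Now order $\Lambda_F$ by $\succ^{-1}$ on the $z$-side and by $\succ$ on the head monomials of $\Psi_F$; Step~1 plus the triangular shape of the $\psi_\beta$ show that $P$ is \emph{triangular} with respect to this ordering, with all diagonal entries equal to $1$. An invertible triangular matrix with unit diagonal has its strictly-triangular part forced to vanish once we also invoke the dual triangularity coming from expressing each $z^\alpha$ in a $\succ^{-1}$-reduced normal form; making this last triangularity argument airtight — i.e.\ checking that the two orderings are compatible so that $P$ is in fact triangular and not merely invertible — is the step I expect to require the most care, and it is exactly where the normal-form property of $M_F$ and the head-exponent structure of $\Psi_F$ must be used in tandem. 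Once $P=\mathrm{Id}$ is established, the theorem follows, and I would close by remarking that this is precisely the compatibility of the algorithm's output with the abstract dual basis $\{\omega_\alpha\}$ of Section~2.2, identifying $\psi_\alpha\,dz=\omega_\alpha$.
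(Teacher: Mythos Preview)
Your Step~1 is correct and is indeed the core computation: $\operatorname{res}_{\{O\}}(z^\alpha,\psi_\beta\,dz)$ equals the coefficient of $\xi^\alpha$ in $\psi_\beta$. You also correctly locate the obstacle: the triangular shape $\psi_\beta=\xi^\beta+\sum_{\xi^\beta\succ\xi^\gamma}c_\gamma\xi^\gamma$ alone does not rule out that some $\xi^\alpha$ with $\alpha\in\Lambda_F$, $\alpha\neq\beta$, appears among the lower terms.

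However, your proposed resolution via nondegeneracy plus triangularity does not close. With a common ordering of $\Lambda_F$ on both sides, your Step~1 shows $P$ is lower triangular with $1$'s on the diagonal; but any such matrix is already invertible, so nondegeneracy of the pairing adds nothing. There is also no ``dual triangularity'' available on the $M_F$ side: each $z^\alpha$ is a single basis monomial, already equal to its own normal form, so no second triangular constraint arises. As written, your argument establishes only that $P$ is unipotent triangular, not that $P=\mathrm{Id}$.

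The paper's argument fills exactly the gap you flagged, and is much simpler. The algorithm produces a \emph{reduced} basis $\Psi_F$, which by definition means $\Lambda_F\cap L_F=\emptyset$: no exponent in $\Lambda_F$ ever occurs as a lower exponent of any $\psi_{\beta}$. Thus for $\alpha\in\Lambda_F$ with $\alpha\neq\beta$, the monomial $\xi^\alpha$ simply cannot appear in $\psi_\beta$ (the head term has exponent $\beta\neq\alpha$, and every other term has exponent in $L_F$, disjoint from $\Lambda_F$), so its coefficient is $0$. Combined with your Step~1 this gives $P=\mathrm{Id}$ immediately, with no appeal to nondegeneracy required. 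In short, the missing ingredient is not a duality argument but the single combinatorial fact $\Lambda_F\cap L_F=\emptyset$, which is built into the notion of a reduced basis and is stated explicitly in the paper's sketch.
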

\noindent 
{\it Sketch of the proof}. \ 
Since the algorithm outputs a {\it reduced} basis of $ H_F, $ we have $ \Lambda_F \cap L_F = \emptyset, $
which implies the result. $\hfill \qed$

\subsection{A key tool}\label{sub32}
Let $ m_i $ be an integer such that $ z_i^{m_i} $ is in the ideal $ I_F = (f_1, f_2,\ldots,f_n) $ in the local 
ring. Then there exist germs $ a_{i,1}(z), a_{i,2}(z),\ldots, a_{i,n}(z) $ of holomorphic functions such that 
$$ z_i^{m_i} = a_{i,1}(z)f_1(z) + a_{i,2}(z)f_2(z) + \cdots + a_{i,n}(z)f_n(z), \qquad i=1,2,\ldots,n. $$
Theory of symbolic computation asserts that such $ n$-tuple of holomorphic functions can be obtained by 
computing syzygies in the local ring $ K\{z\}. $ Whereas, since the cost of computation of syzygies in local rings
is high, a direct use of the classical algorithm of computing syzygy is not appropriate  in actual computations. In fact, it is difficult to obtain these holomorphic functions. In previous papers \citep{nt4}, the authors of the present paper have proposed a new effective method to overcome this type of difficulty. 

We adopt the proposed method mentioned above and devise a new, much more  efficient algorithm by 
improving the previous algorithm presented in \citep{nt3,nt4}.
We start by recalling the main idea given in \citep{nt4}.
Let $ J_F=\langle f_1(z), f_2(z),\ldots,$ $f_n(z) \rangle $ denote the ideal in the polynomial ring $ K[z] $ generated
 by $ f_1(z), f_2(z),$ $\ldots, f_n(z). $ Let $ J_{F,O} $ be the primary component of $ J_F $ whose associated prime 
is the maximal ideal $ {\mathfrak m}=\left< z_1, z_2,\ldots, z_n \right>, $ and $ G_Q $ a Gr\"obner basis of the ideal quotient
$ Q=J_F: J_{F,O} \subset K[z]. $ Then there is in $ G_Q $ a polynomial, say $ q(z), $ such that $ q(O) \ne 0. $

Now let $ r(z) \in J_{F, O}. $ Then, since $ q(z)r(z) \in J_F, $ there exists an $n$-tuple of polynomials 
$ p_1(z), p_2(z),\ldots,p_n(z) $ in $ K[z] $, such that  
$$ q(z)r(z) = p_1(z)f_1(z) + p_2(z)f_2(z) + \cdots + p_n(z)f_n(z). $$

Since, $ q(O) \ne 0, $ we have a following  expression in the local ring $ K\{z\}: $
$$ r(z) = \frac{p_1(z)}{q(z)}f_1(z) + \frac{p_2(z)}{q(z)}f_2(z) + \cdots + \frac{p_n(z)}{q(z)}f_n(z). $$

Since $ I_F = K\{z\} \otimes J_{F,O} $ and $ z_i^{m_i} \in I_F, $ $ z_i^{m_i} \in J_{F,O} $ holds.
Therefore, the argument above can be applied to compute 
 germs $ a_{i,1}(z), a_{i,2}(z),\ldots , a_{i,n}(z) $ of holomorphic functions.
 Note also that, since $ J_{F, O} = \{ p(z) \in K[z] \mid p(z)*\psi_{\alpha} = 0, \ \psi_{\alpha} \in \Psi_F \}, $
 the primary ideal $ J_{F,O} $ can be computed by using $ \Psi_F. $

Let $ G_F =\{ g_1, g_2,\ldots, g_{\nu} \} $ be a Gr\"obner basis of $ J_F. $ 
Let $ R_F $ be a list of relations between $ g_j $ and $ F=[f_1,f_2,\ldots,f_n]: $
\begin{center}
$ g_j = r_{1,j}f_1 + r_{2,j}f_2 + \cdots + r_{n,j} f_n, $
\end{center}
where $ r_{i,j} \in K[z],  \ i=1,2,\ldots,n$, and $j=1,2,\ldots, \nu. $
Let $ S_F $ be a Gr\"obner basis of the module of syzygies among $ F: $
$$ s_1f_1+s_2f_2 + \cdots + s_nf_n = 0, $$
where $ s_i \in K[z], \ i=1,2,\ldots,n. $ Let $ q $ be a polynomial in $ G_Q $ such that $ q(O) \ne 0. $

Now we are ready to present a new tool.

\noindent
\hrulefill\\
{\bf Algorithm 1}. {\bf localexpression} \vspace{-3.0mm} \\
 \mbox{}\hrulefill  \\
{\bf Input}: $ G_F, R_F, S_F, q, r. $ \\
{\bf Output}: $ [p_1, p_2,\ldots,p_n] $ \ such that  
$ q(z)r(z) =p_1(z)f_1(z) + p_2(z)f_2(z)+ \cdots + p_n(z)f_n(z). $ \\
{\bf BEGIN}
\begin{enumerate}
\setlength{\leftskip}{7mm}
\item[step 1:] divide $ qr$ by the Gr\"obner basis $ G_F=\{ g_1, g_2,...,g_{\nu} \} $:
\begin{center}
$ qr= e_1g_1+e_2g_2 + \cdots + e_{\nu}g_{\nu}; $
\end{center}
\item[step 2:]
rewrite the relation above by using $ R_F $:
\begin{center} 
$ qr=\left(\sum_{j}r_{j,1} e_j\right)f_1+\left(\sum_{j}r_{j,2}e_j\right)f_2 + \cdots + \left(\sum_{j}r_{j,2}e_j\right)f_n; $
\end{center}
\item[step 3:] 
simplify the expression above by using $ S_F$:
$$ q(z)r(z) =p_1(z)f_1(z) + p_2(z)f_2(z)+ \cdots + p_n(z)f_n(z); $$
\end{enumerate}
\noindent
return $ [p_1,p_2,\ldots ,p_n] ; $\\
{\bf END} \vspace{-3mm}\\
\noindent 
 \mbox{}\hrulefill \vspace{-3mm} \\
\noindent 

\begin{exmp}[$ E_{12} $ singularity] 
Let $ f(x, y) = x^3+y^7+xy^5 $ and let $ F=[\frac{\partial f}{\partial x}(x,y), $ $\frac{\partial f}{\partial y}(x,y) ]. $ Note that $ f(x,y) $ is a semi quasi-homogeneous function with respect to the weight vector $ (7,3). $ Let $ \succ $ be the weighted degree lexicographical ordering on $ K[\xi, \eta] $ with respect to the weight vector $ (7,3), $ where $ \xi, \eta $ correspond to $ x, y. $

Then, $ \dim_K(H_F) = 12, $ the Milnor number at the origin $ (0, 0 ) $ of the curve $ \{ (x,y) \in {\mathbb C}^2 \mid f(x,y) = 0 \}. $  The algorithm for computing Grothendieck local duality, mentioned in this 
section, outputs a basis $ \Psi_F $ that consists of the following 12 local cohomology classes;
\begin{flushleft}
$
1, \eta, \xi, \eta^2, \xi\eta, \eta^3, \xi\eta^2, \eta^4, \eta^5-\frac{1}{3}\xi^2, \xi\eta^3, \xi\eta^4-\frac{5}{7}\eta^6+\frac{5}{21}\xi^2\eta,$\\
$\xi\eta^5-\frac{5}{7}\eta^7 -\frac{1}{3}\xi^3+\frac{5}{21}\xi^2\eta^2.
$
\end{flushleft}

Note for instance that the local cohomology class $  \left[ \begin{array}{c} 1 \\ x y^6  \end{array} \right]  -\frac{1}{3}\left[ \begin{array}{c} 1 \\ x^3 y \end{array} \right]  $ represented by $ \psi_{(0,5)}=\eta^5-\frac{1}{3}\xi^2 $ above acts on a holomorphic function $ h(x,y) = \sum_{(i,j)}c_{(i,j)}x^i y^j $ by 
\begin{center}
$ {\rm res}_{O}(h(x,y), \psi_{(0,5)}dx\wedge dy) = c_{(0,5)} -\frac{1}{3}c_{(2,0)}. $
\end{center}

The output implies that 
\begin{flushleft}
$ \Lambda_F = \{ (0,0), (0,1),(0,2), (1,0),(0.3), (1,1),  (0,4), (1,2), (0,5), (1,3), (1,4), $\\
$(1,5) \} $
\end{flushleft} 
and $ M_F = \{ x^i y^j \mid (i,j) \in \Lambda_F \} $ is the 
monomial basis of the quotient space $ K\{x,y\}/I_F $ with respect to the local term ordering $ \succ^{-1} $
on $ K\{x,y\}, $ where $ I_F $ denote the ideal in $ K\{ x,y \} $ 
generated by $  \frac{\partial f}{\partial x}, \frac{\partial f}{\partial y} $. 
Furthermore $ W_F =\{ \psi dx \wedge dy \mid \psi \in \Psi_F \} $ is the dual basis of the monomial basis 
$ M_F $ with respect to the Grothendieck local residue pairing. 
Since $ \lambda_F =(3,7) $, we have $ x^4, y^8 \in I_F. $ 
 
Let $ J_F $ be the ideal in $ K[x,y] $ generated by the two polynomials $ \frac{\partial f}{\partial x}, \frac{\partial f}{\partial y}. $ Let $ J_{F, O} $ be the primary component of $ J_F $ whose associated prime is the maximal ideal $ \left< x, y \right>. $ A Gr\"obner basis of the ideal quotient $ J_{F,O} : J_F $ is\begin{center} 
$ 3125x+151263, \ 25y+147. $
\end{center}
Set $ q(x,y) = 25y+147. $ Then, the algorithm {\bf localexpression} outputs the following: \vspace{-1cm}
\begin{eqnarray*}
 q(x,y)x^4 &=& (49x^2+ 25/3x^2y-49/3y^5) \frac{\partial f}{\partial x} 
                        + ( -5/3xy^2+7/3y^4) \frac{\partial f}{\partial y}, \\
 q(x,y)y^8 &=& 25y^4 \frac{\partial f}{\partial x} + (-15x+21y^2)\frac{\partial f}{\partial y}.
\end{eqnarray*}
\  \vspace{-1.3cm}
\end{exmp}

\section{Algorithms}
Let $ \tau_F $ denote the local cohomology class in $ H_F $ defined to be
$$ \tau_F = \left[ \begin{array}{c} 1 \\ f_1(z) f_2(z) \cdots f_n(z) \end{array} \right].  $$
Since $ \omega_F = \tau_F dz,  $ the local cohomology class $ \tau_F $ is the kernel function of the 
point residue mapping. 

Let
\begin{center}
$ q(z)z_i^{m_i} = p_{i, 1}(z)f_1(z) + p_{i,2}(z)f_2(z) + \cdots + p_{i,n}(z)f_n(z), \ i=1,2,\ldots,n, $
\end{center}
and set $ {\rm Det}(z) = {\rm det}(p_{i,j}(z))_{1 \leq i,j \leq n}.  $

Let $ I_M $ be the ideal in $ K[z] $ generated by $ z_1^{m_1}, z_2^{m_2},\ldots,z_n^{m_n}. $
Let $ u(z) \in K[z] $ be a polynomial such that  
$ u(z)q(z) -1 \in I_M $. 

Since $  A(z)=\det (p_{i,j}(z)/q(z))_{1\leq i,j \leq n} $ is equal to $ \frac{1}{q(z)^n}{\rm Det}(z), $
the transformation law 
implies the following
\begin{center}
$\displaystyle  \tau_F = 
\left[ \begin{array}{c} u(z)^n {\rm Det}(z)  \\ z_1^{m_1} z_2^{m_2} \cdots z_n^{m_n} \end{array} \right].  $
\end{center}

Let $ \lambda_F =(\ell_{F,1}, \ell_{F,2},\ldots, \ell_{F,n}). $ Since $ m_i = \ell_{F,i} +1, $ 
the formula above can be rewritten as 
$ \tau_F= u(z)^n{\rm Det}(z) \ast \xi^{\lambda_F}. $

Note that, according to an algorithm in \citep{ss} discovered by Y. Sato and A. Suzuki, the inverse $ u(z) $ of $ q(z) $ in $ K[z]/I_M $ can be obtained  by using Gr\"obner basis computation.

The following algorithm computes a representation of the local cohomology class 
$ \tau_F $, the kernel function of the point residue mapping.

\noindent
\hrulefill\\
{\bf Algorithm 2}. {\bf tau} \vspace{-2.0mm} \\
 \mbox{}\hrulefill \\ 
{\bf Input:} $ V=[z_1,z_2,\ldots,z_n], \succ,  F = [f_1(z), f_2(z),\ldots, f_n(z) ]. $\\
 \hspace{12ex}  /* $V$: a list of variables, $\succ$: a term order */ \\
{\bf Output:} $ \tau_F= \sum_{\alpha \in \Lambda_F} b_{\alpha}\psi_{\alpha}.$\\
{\bf BEGIN}
\begin{enumerate}
\setlength{\leftskip}{8mm}
\item[step 1:] compute a basis $ \Psi_F = \{ \psi_{\alpha} \mid \alpha \in \Lambda_F \} $ of the  space $ H_F $; \\
\hspace{5ex}   /* $ \Lambda_F $: the set of head terms of $ \Psi_F $  */
\item[step 2:] compute $ \ell_{F,i} = \max \{ \ell | \xi_i^{\ell} \in T_F \} $ and set $ m_i = \ell_{F,i} +1, \ i=1,2,\ldots,n$; \\
\hspace{5ex} /* $ T_F = \{ \xi^{\lambda} \mid \lambda \in E_F \} $ */
\item[step 3:] compute a Gr\"obner basis of the ideal 
    $$ J_{F,O} = \{ p(z) \in K[z] \mid p(z) \ast \psi_{\alpha} = 0, \alpha \in \Lambda_F \}; $$

\item[step 4:] compute $ G_F, R_F, S_F$; \\
\hspace{5ex} /* notations are from subsection~\ref{sub32}*/

\item[step 5:] compute a Gr\"obner basis $ G_Q $ of the quotient ideal $ Q= J_F : J_{F,O} $ and 
choose a polynomial $ q(z) $ from $ G_Q $ such that $ q(O) \ne 0$;

\item[step 6:] compute
$$ q(z)z_i^{m_i} =  p_{i, 1}(z)f_1(z) + p_{i,2}(z)f_2(z) + \cdots + p_{i,n}(z)f_n(z), \ (i=1,2,\ldots,n), $$
by using the algorithm {\bf localexpression}; 

\item[step 7:] compute $ {\rm Det}(z) = {\rm det}(p_{i,j}(z))_{1 \leq i,j \leq n} $ and set $ {\rm ND}= {\rm NF}_{I_M}({\rm Det}(z) ),  $ the normal form of $ {\rm Det}(z) $ with respect to $ I_M $;

\item[step 8:] compute a Gr\"obner basis of the ideal in $ K[z, u] $ generated by
 $$ 1-q(z)u, z_1^{m_1}, z_2^{m_2},\ldots, z_n^{m_n} $$ 
with respect to an elimination ordering to eliminate $ u $;
\item[step 9:] choose a polynomial of degree one with respect to $ u $, of the  form $ cu + poly(z) $, from the Gr\"obner basis of step~8 and set 
$$ 
{\rm Den}= (-c)^n, \ {\rm NU}= {\rm NF}_{I_M}(poly(z)^n), \ {\rm Num}= {\rm NF}_{I_M}({\rm ND} \times {\rm NU}); $$

\item[step10:] compute $ \psi = {\rm Num} \ast \xi^{\lambda_F} $ and  
set $ {\rm Coeff}=\{ c_{\alpha} \mid \alpha \in \Lambda_F \} $; \\
\hspace{5ex} /* $ c_{\alpha} $ is the coefficient of a term $ \xi^{\alpha} $ of $ \psi $, $ \alpha \in \Lambda_F. $ */ 
\end{enumerate}
return  $ [ \Lambda_F, \Psi_F, {\rm Coeff}, {\rm Den} ]; $\\
{\bf END}\vspace{-3mm}\\
\noindent 
 \mbox{}\hrulefill \vspace{-3mm} \\
\noindent

The return of the algorithm above means 
$$ \tau_F = \frac{1}{{\rm Den}}\sum_{\alpha \in \Lambda_F} c_{\alpha} \psi_{\alpha}. $$
Note that, since, 
$$ {\res}_{O}(h(z)\tau_F dz) = \frac{1}{{\rm Den}}\sum_{\alpha \in \Lambda_F}b_{\alpha} {|res}_{O}(h(z)\psi_{\alpha}dz) $$
holds, the output of the algorithm above completely describes the Grothendieck point residue mapping
$$ h(z) \longrightarrow  {\rm res}_{\{O\}}\left(\frac{h(z)dz}{f_1(z)f_2(z) \cdots f_n(z)}\right). $$

\vspace{2ex}

Let $ {\rm Res}_F = {\bf tau}(V, \succ, F) $ be the output of the algorithm  {\bf tau}.
The following algorithm {\bf residues} evaluates the value of Grothendieck point residue. 

\noindent
\hrulefill\\
{\bf Algorithm 3}. {\bf residues} \vspace{-2.0mm} \\
 \mbox{}\hrulefill\\
{\bf Input:} $ h \in K[z], $ \ $ {\rm Res}_F. $\\
{\bf Output:} $ {\rm res}_{\{O\}}(h(z)\tau_F dz). $\\
{\bf BEGIN}
\begin{enumerate}
\setlength{\leftskip}{7mm}
\item[step 1:] compute the normal form of $ h$ by using $ \Psi_F $, i.e.,  $\displaystyle {\rm NF}_{\succ}(h)(z) = \sum_{\alpha \in \Lambda_F} h_{\alpha} z^{\alpha} $;

\item[step 2:] compute $\displaystyle  {\rm sum} = \sum_{\alpha \in \Lambda_F} h_{\alpha}c_{\alpha} ;$
\end{enumerate}
return $\displaystyle  \frac{{\rm sum}}{{\rm Den}} $ ;\\
{\bf END}\vspace{-3mm}\\
\noindent 
 \mbox{}\hrulefill \vspace{-3mm} \\
\noindent

Note that $ {\rm NF}_{\succ}(h) $ is  computed by the algorithms given in \citep{tn3,tnn}. The algorithm is free from standard bases computation. 
All the algorithms given in the present paper are implemented in a computer algebra system Risa/Asir{\citep{noro}).

\begin{exmp}[$ E_{12} $ singularity]
Let us continue the computation. Since step 1 to step~6 are done, we start from step 7.
From
$$\left(
\begin{array}{cc}
p_{1,1}& p_{1,2}\\
p_{2,1}& p_{2,2} 
\end{array} 
\right)=
\left(
\begin{array}{cc}
25/3x^2y+49x^2& -5/3xy^2+7/3y^4\\
25y^4& -15x+21y^2 
\end{array} 
\right),
$$
we have the determinant 
$$
{\rm Det} =
  (-125y-735)x^3+(175y^3+1029y^2)x^2+(125/3y^6+245y^5)x-175/3y^8-343y^7. 
$$

A Gr\"obner basis of the ideal in $ K[x,y,u] $ generated by $ 1-uq(x,y), x^4, y^8 $ with respect to a elimination ordering  $ u \succ x, y $ is \vspace{3mm}\\
\noindent
$\{x^4, y^8, -6103515625y^7+35888671875y^6-211025390625y^5+1240829296875y^4$ 
$-7296076265625y^3 +42900928441875y^2-252257459238225y
-21804125746715$ 
$2161u+1483273860320763 \}.$ \vspace{3mm}

We have 


\noindent
$ {\rm Num} = (6654091109227055694580078125y^7-391260557222550874841308593$ 
$75y^6+230061207646859914406689453125y^5-1352759900963536296711333984$ 
$375y^4+7954228217665593424662643828125y^3-46770861919873689337016345$ 
$709375y^2+275012668088857293301656112771125y-1617074488362480884613$ 
$737943094215)x^3+(-322085690705603880169365234375y^7+189386386134895$ 
$0815395867578125y^6-11135919504731830794527701359375y^5+6547920668$
$7823165071822883993125y^4 -385017735324400210622318557879575y^3+22639$ 
$04283707473238459233120331901y^2)x^2+(1559028730662456311233878190312$ 
$5y^7-91670889362952431100552037590375y^6+539024829454160294871245981$
$031405y^5)x-754634761235824412819744373443967y^7 $ \vspace{3mm} \\
and 
$ {\rm Den} = (218041257467152161)^2. $

Since $\displaystyle  b_{\alpha} = \frac{c_{\alpha}}{{\rm Den}}, $ we have 
$ \tau_F = \displaystyle{\frac{1}{{\rm Den}}({\rm Num}\ast(\xi^3\eta^7))}. $ 

\noindent
Therefore,

\vspace{1ex}
\noindent
$ \tau_F = 
30517578125/218041257467152161-1220703125/1483273860320763\eta+4$ 
$8828125/10090298369529\eta^2-1953125/68641485507\eta^2+78125/466948881\eta^4-3$ 
$125/3176523\eta^5+125/21609\eta^6-5/147\eta^7-9765625/1441471195647\xi+390625/9$ 
$805926501\xi\eta-15625/66706983\xi\eta^2+625/453789\xi\eta^3-25/3087\xi\eta^4+1/21\xi\eta^5+3125/9529569\xi^2-125/64827\xi^2\eta+5/441\xi^2\eta^2-1/63\xi^3.$\vspace{3mm}

This yields 
$$ \tau_F= \sum_{0 \leq i, j \leq 5} b_{i,j}\psi_{i,j}, $$
where \ 
$ b_{0,0} = 30517578125/218041257467152161,$ \\
$b_{0,1}=-1220703125/1483273860320763, \\ b_{0,2}=48828125/10090298369529, b_{0,3} = -1953125/68641485507, \\
b_{0,4} = 78125/466948881, b_{0,5} = -3125/3176523, \\
b_{1,0} = -9765625/1441471195647,
b_{1,1} = 390625/9805926501, \\
b_{1,2}=-15625/66706983, 
b_{1,3}=625/453789, 
b_{1,4}=-25/3087, b_{1,5}=1/21.$ \vspace{3mm} \\
\noindent 
and 

\noindent
$ \psi_{0,0}=1, \ \psi_{0,1}=\eta, \ \psi_{0,2}=\eta^2, \ \psi_{0,3}=\eta^3,\  \psi_{0,4}=\eta^4, \ \psi_{0,5}=\eta^5-\dfrac{1}{3}\xi^2, \\
\psi_{1,0}=\xi,\  \psi_{1,1}=\xi\eta, 
\psi_{1,2}=\xi\eta^2, \ \psi_{1,3}=\xi\eta^3, \\ 
\psi_{1,4}=\xi\eta^4-\dfrac{5}{7}\eta^6 +\dfrac{5}{21}\xi^2\eta, \ 
\psi_{1,5} = \xi\eta^5-\dfrac{5}{7}\eta^7 - \dfrac{1}{3}\xi^3 + \dfrac{5}{21}\xi^2\eta^2. $

\vspace{2ex}
Let $ {\rm NF}_{\succ}(h)(x,y) = \sum_{(i,j) \in \Lambda_F} h_{i,j}x^i y^j. $
Then, 
$$ {\rm res}_{\{O\}}(h(x,y), \tau_F dx \wedge dy) = \sum_{(i,j) \in \Lambda_F} h_{i,j}b_{i,j}. $$

\noindent
We have for instance, \ 
$ \displaystyle{ {\rm res}_{\{O\}}\left(\frac{dx \wedge dy}{\frac{\partial f}{\partial x} \frac{\partial f}{\partial y}}\right)
= \frac{30517578125}{218041257467152161}.} $

Recall that , as
local cohomology class $ \omega_F = \tau_F dx \wedge dy $ is in $ {\mathcal H}_{\{(0,0)\}}^2( \Omega_X^{2}),  $ the cohomology class $ \tau_F $ defines the residue mapping
$$ {\rm res}_{\{ O \}} ( \ast, \tau_F ): 
{\mathcal O}_{X, O}   \longrightarrow {\mathbb C}. $$

\noindent
Therefore, the formula above is valid for {\it germs of holomorphic functions} $ h(x,y) $. More precisely, 
for a germ of holomorphic function $\displaystyle  h(x,y) = \sum_{(i,j)} c_{i,j}x^i y^j $, we have

\vspace{1ex}
\noindent
 $ {\rm res}_{\{O\}}( h(x,y), \tau_F dx \wedge dy) = c_{0,0}b_{0,0} + c_{0,1}b_{0,1} + c_{0,2}b_{0,2}
+c_{1,0}b_{1,0} + c_{0,3}b_{0,3}+c_{1,1}b_{1,1}  + c_{0,4}b_{0,4}  
+ c_{1,2}b_{1,2}  + (c_{0,5}-\frac{1}{3}c_{2,0})b_{0,5} + c_{1,3}b_{1,3} 
+ (c_{1,4} -\frac{5}{7}c_{0,6}+\frac{5}{21}c_{2,1})b_{1,4} 
+ (c_{1,5} -\frac{1}{3}c_{3,0} -\frac{5}{7}c_{0,7}+\frac{5}{21}c_{2,2})b_{1,5}. $\end{exmp}

\section{ $ \mu $-constant deformation }

In this section, we consider a $ \mu$-constant deformation of a quasi homogeneous singularity, 
a family of semi-quasi homogeneous isolated hypersurface singularities \citep{gl,lr}. 
We give, as an application 
of the algorithms presented in the previous section, an algorithm for computing Grothendieck 
point residues associated to a $ \mu$-constant deformation of a quasi homogeneous isolated 
hypersurface singularity. The keys of the resulting algorithm are the use of parametric local cohomology
 systems and parametric Gr\"obner systems (comprehensive Gr\"obner systems).

Let $ w =(w_1,w_2,\ldots,w_n) \in {\mathbb N}^n $ be a weight vector for $ z =(z_1, z_2,\ldots, z_n). $
Let $ d_{w}(z^{\lambda}) $ denote the weighted degree of a monomial 
$ z^{\lambda}=z_1^{\ell_1}z_2^{\ell_2} \cdots z_n^{\ell_n} $ defined to be 
$$ d_{w}(z^{\lambda}) = \ell_1 w_1 + \ell_2 w_2 + \cdots + \ell_n w_n. $$

\begin{defn}

\begin{enumerate}
\item[(1)] A non-zero polynomial $ f_0 $ is called a weighted homogeneous (or  quasi homogeneous) polynomial of type  $(d, w)$, if all monomials of $ f_0 $ 
 have the same weighted degree $ d $ with respect to the weight vector $ w $, that is 
$ f_0 = \sum_ {d_w(z^{\lambda}) = d}  c_{\lambda} z^{\lambda}$ where $c_\lambda \in K$. 

\item[(2)] A polynomial $ f(z) = f_0(z) + g(z) $ is called a semi weighted homogeneous (or semi quasi homogeneous) polynomial of type $(d,w)$, if
\item[(i)] $ f_0 $ is weighted homogeneous of type $(d, w)$, and $ f_0(z) =0 $ has an isolated singularity at the origin $ O$,
and 
\item[(ii)] $ g(z) = \sum_{d_w(z^{\beta_j}) > d} b_j z^{\beta_j}, $ where $ b_{j} $ are coefficients.
\end{enumerate}
\end{defn}

Let $ t=(t_1, t_2,\ldots, t_m) $ denote a set of new indeterminates, and let $ T = \{ t \mid t \in {\mathbb C}^m \}. $
Let $$ f_t(z) = f_0(z) +  g(z, t), \  \mbox{with} \ \ g(z,t) = \sum_{d_w(z^{\beta_j}) > d} t_j z^{\beta_j} $$
be a family of semi weighted homogeneous polynomials in $ K(t)[z] $, where $ t \in T $ is regarded as a deformation parameter. Then $ f_t $ is a $ \mu$-constant deformation of $ f_0.$

Set $ F = [ \frac{\partial f}{\partial z_1},  \frac{\partial f}{\partial z_2},\ldots,  \frac{\partial f}{\partial z_n} ]. $
Let $ I_F $ denote a family of ideals in $ K(t)\{ z \} $ generated by $ F $ with the parameter $ t \in T $ and let
$$ H_F = \SetDef{ \psi \in H_{\{O \} }^n (K(t)[z])}{\frac{\partial f}{\partial z_1} \ast \psi = \frac{\partial f}{\partial z_2}\ast \psi = \cdots =
\frac{\partial f}{\partial z_n}\ast \psi = 0}. $$

Let $ \succ $ be a term ordering on $ K(t)[\xi] = K(t)[\xi_1, \xi_2,\ldots,\xi_n] $ compatible with the weight 
vector $ w. $ 
It is known, for semi weighted homogeneous cases, that the set of leading exponents $ \Lambda_F $ is independent 
of $ t $ and thus so is the corresponding basis monomial set $ M_F. $ 
In our previous papers \citep{nt1,nt3}, an algorithm for computing a basis $ \Psi_F $ of $ H_F $ is given.
The algorithm also computes Grothendieck local duality as in the non parametric cases.
The other steps, from step 3 to step 10 in the algorithm {\bf  tau } are also executable by using parametric Gr\"obner systems. The step 1 and step 2 of the algorithm {\bf residues} are also executable. 

Here we give an example of computation.

\begin{exmp}[$ E_{12} $ singularity] Let us consider $ f=x^3+y^7+txy^5 $ ( $ t \ne 0 $).\\
\noindent
{\bf step 1:} \ A basis $ \Psi_F$ of the vector space $ H_F $ with respect to a term ordering $ \succ $ compatible with the weight $ w=(7,3) $ is
\begin{flushleft}
$\left\{1, \eta, \eta^2, \xi, \eta^3, \xi\eta, \eta^4, \xi\eta^2,
\eta^5-\frac{t}{3}\xi^2, \xi\eta^3, 
\xi\eta^4-\frac{5t}{7}\eta^6+\frac{5t^2}{21}\xi^2\eta, \right.$ \\
$\left. \xi\eta^5-\frac{t}{3}\xi^3-\frac{5t}{7}\eta^7+\frac{5t^2}{21}\xi^2\eta^2\right\}.$
\end{flushleft}

\noindent
The set $ \Lambda_F $ is 
\begin{flushleft}
$\Lambda_F = \{ (0,0), (0,1),(0,2), (1,0),(0.3), (1,1),  (0,4), (1,2), (0,5), (1,3), (1,4),$ \\
$ (1,5) \}. $
\end{flushleft}

\noindent
{\bf step 2:} $ x^4, y^8 \in I_F. $

\noindent
{\bf step 5:} $ q(x, y) = 147+25t^3 y \in J_F : J_{F,O}. $

\noindent
{\bf step 6:}  
$
\left(
\begin{array}{c}
q(x,y)x^4\\
q(x,y)y^8
\end{array} 
\right)$ \\
\hspace{1.5cm}$=\left(
\begin{array}{cc}
 (25/3t^3y+49)x^2-49/3ty^5, &  -5/3t^3y^2x+7/3t^2y^4 \\  
 25t^2y^4   & -15tx+21y^2
\end{array} 
\right)\left(
\begin{array}{c}
\frac{\partial f}{\partial x}\\
\frac{\partial f}{\partial y}
\end{array} 
\right).
$


\vspace{1ex}
\noindent
{\bf step 7:} \ $ {\rm Det}(x,y) $ is 

\vspace{1ex}
\noindent
$
(-125t^4y-735t)x^3+(175t^3y^3+1029y^2)x^2+(125/3t^5y^6+245t^2y^5)x-175/3t^4y^8-343ty^7.
$

\vspace{1ex}
\noindent
{\bf step 8:} \ A Gr\"obner basis of $ \langle x^4, y^8, 1-q(x,y)u \rangle $ is 

\vspace{1ex}
\noindent 
$
\{y^8, x^4, -6103515625t^{21}y^7+35888671875t^{18}y^6-211025390625t^{15}y^5 
+12408292$ 
$96875t^{12}y^4-7296076265625t^9y^3+42900928441875t^6y^2 
-252257459238225t^3y-218041257467152161u+1483273860320763\}.
$

\vspace{1ex}
\noindent
{\bf step 9:} \ We have 

\vspace{1ex}
\noindent
$ {\rm Den} = (218041257467152161)^2, $

\vspace{1ex}
\noindent
$
poly(x,y) =-6103515625t^{21}y^7+35888671875t^{18}y^6-211025390625t^{15}y^5+1$ 
$240829296875t^{12}y^4 -7296076265625t^9y^3+42900928441875t^6y^2-252257459238$ 
$225t^3y+1483273860320763,$

\vspace{2ex}
\noindent
$ {\rm NU} = 
-72425481460974755859375000t^{21}y^7+372629102116715118896484375$ 
$t^{18}y^6-1878050674668244199238281250t^{15}y^5+920244830587439657626757812$
$5t^{12}y^4-43288316830833161494762687500t^9y^3+19090147722397424219190345$
$1875t^6y^2-748333790717979029392261531350t^3y+220010134471085834641324$
$8902169,
$

\vspace{1ex}
\noindent
and 

\vspace{1ex}
\noindent
$ 
{\rm Num}=(6654091109227055694580078125t^{22}y^7-39126055722255087484130859$
$375t^{19}y^6+230061207646859914406689453125t^{16}y^5-13527599009635362967113$
$33984375t^{13}y^4+7954228217665593424662643828125t^{10}y^3-46770861919873689$
$337016345709375t^7y^2+27501266808885729330165611277112tt^4y-16170744883$
$62480884613737943094215t)x^3+(-322085690705603880169365234375t^{15}y^7+1$
$893863861348950815395867578125t^{12}y^6-11135919504731830794527701359375$
$t^9y^5+65479206687823165071822883993125t^6y^4-38501773532440021062231855$
$7879575t^3y^3+2263904283707473238459233120331901y^2)x^2+(155902873066245$
$63112338781903125t^8y^7-91670889362952431100552037590375t^5y^6+539024829$ 
$454160294871245981031405t^2y^5)x-754634761235824412819744373443967ty^7.$

\vspace{1ex}

As an output 
we thus have
\begin{center}
$\displaystyle  \tau_F= \sum_{0 \leq i, j \leq 5} b_{i,j}\psi_{i,j}, $
\end{center}
where \\
\noindent
$ b_{0,0} = 30517578125t^{22}/218041257467152161, \\
b_{0,1}=-1220703125t^{19}/1483273860320763, \\ b_{0,2}=48828125t^{16}/10090298369529, b_{0,3} = -1953125t^{13}/68641485507, \\
b_{0,4} = 78125t^{10}/466948881, b_{0,5} = -3125t^{7}/3176523, \\
b_{1,0} = -9765625t^{15}/1441471195647,\\ 
b_{1,1} = 390625t^{12}/9805926501, 
b_{1,2}=-15625t^{9}/66706983, \\
b_{1,3}=625t^{6}/453789, 
b_{1,4}=-25t^{3}/3087,
b_{1,5}=1/21. $ \\
\noindent 
and 

\noindent
$ \psi_{0,0}=1, \ \psi_{0,1}=\eta, \ \psi_{0,2}=\eta^2, \ \psi_{0,3}=\eta^3, \ \psi_{0,4}=\eta^4, \ \psi_{0,5}=\eta^5-\dfrac{t}{3}\xi^2, \\
\psi_{1,0}=\xi, \ \psi_{1,1}=\xi\eta, \ 
\psi_{1,2}=\xi\eta^2, \ \psi_{1,3}=\xi\eta^3, \ 
\psi_{1,4}=\xi\eta^4-\dfrac{5t}{7}\eta^6 +\dfrac{5t^2}{21}\xi^2\eta, \\
\psi_{1,5} = \xi\eta^5-\dfrac{5t}{7}\eta^7 - \dfrac{t}{3}\xi^3 + \dfrac{5t^2}{21}\xi^2\eta^2. $

\vspace{2ex}

We have, for instance, 
\begin{center}
$\displaystyle  {\rm res}_{\{O\}}\left(\frac{dx \wedge dy}{\frac{\partial f}{\partial x} \frac{\partial f}{\partial y}}\right)
= \frac{30517578125}{218041257467152161}t^{22}. $
\end{center}
\end{exmp}




\bibliographystyle{elsarticle-harv}

\bibliography{tajima}  








\end{document}